\algnewcommand\algorithmicforeach{\textbf{for each}}
\newcommand{\rimp}{\Rightarrow}
\newtheorem{thm}{Theorem} 
\newtheorem{observation}[thm]{Observation}
\title{Efficient LTL Decentralized Monitoring Framework Using Formula Simplification Table}
\author{Omar Bataineh$^{\star}$, David Rosenblum$^{\star}$, and Mark Reynolds$^{\ast}$ }
\institute{$^{\star}$National University of Singapore \\
         $^{\ast}$University of Western Australia}
\date{}
\begin{document}

\maketitle

\begin{abstract}

This paper presents a new technique for optimizing
formal analysis of propositional logic formulas and 
Linear Temporal Logic (LTL) formulas,
namely the formula simplification table. 
A formula simplification table is a mathematical table that 
shows all possible simplifications of the formula
 under different truth assignments of its variables.
The advantages of constructing a simplification table of a formula are two-fold.
First, it can be used to compute the logical influence 
weight of each variable in the formula,
which is a metric that shows the importance of the variable 
in affecting the outcome of the formula.
Second, it can be used to identify variables that have the highest logical 
influences on the outcome of the formula.
We demonstrate the effectiveness of formula
simplification table in the context of software verification 
by developing efficient framework to the well-known 
decentralized LTL monitoring problem.

\end{abstract}

\section{Introduction}

This paper describes several new techniques to improve
formal analysis of both propositional
logic formulas and Linear temporal logic formulas.
The new presented improvement techniques are mainly based on the notion
of formula simplification table.
A formula simplification table is a mathematical table that shows
all possible simplified forms of the formula under
different truth assignments of its variables.
Constructing a simplification
table of a formula has several advantages.
First, it can be used to compute a logical influence 
weight of each variable in the formula,
which is a metric that shows the importance of the 
variable to the outcome of the formula.
Second, it can be used to identify variables in the specification
that have the highest logical influence on its outcome.
Third, it can be used to synthesize Boolean expressions 
for sets of configurations (i.e., assignments of variables) 
that yield the same simplified formulas of the original formula.
Hence,  formula simplification
table can be used to optimize existing solutions
of several fundamental software verification problems.


However, the scalability of formula simplification table requires 
controlling  the size of the formula
(i.e., the number of variables in the formula),
as the size of the table grows exponentially with respect to the number of variables.
To address this issue we present an algorithm for
reducing large formulas to a simplified form
by detecting and contracting variables whose
logical influences on the outcome of the formula are equivalent.
Instead of using specialized heuristics
to control formula size, we present a systematic approach for simplifying LTL formulas
that identifies variables with equivalent logical influences
on the outcome of the formula. 
Hence, simplifications we perform in this paper cannot be obtained by detecting
duplicates, syntactic contradictions or tautologies. 


The presented simplifications are mainly based 
on the observation that most of large formulas
contain variables with equivalent logical influences,
and therefore one needs not to consider all the variables in the formula
when constructing a formula simplification table. 
It is possible then to construct
much smaller formula sufficient to prove the original property.
In particular, given an input formula $\varphi$, 
our simplification technique produces a simplified formula $\varphi^{'}$
while reducing and contracting variables
whose logical influences on the outcome of the formula are equivalent.
Then some sound logical extension rules
are applied to draw valid conclusions about the original formula.

We demonstrate the effectiveness of formula simplification
table  in the context of software verification 
by developing efficient solution to the well-known decentralized LTL monitoring problem.
In decentralized LTL monitoring problem, a set of processes 
cooperate with each other in order to monitor a global LTL formula,
where each process observes only subset of the variables
of the main formula. The problem is to allow each process
to monitor the formula through communicating with other processes.
The goal is then to develop a solution that allows processes
to detect violation of the global formula as early as possible
and with least communication overhead.
We develop an efficient solution to the problem 
by synthesizing efficient communication strategy for processes
that allows them to propagate their observations in an optimal way.

\section{The Decentralized LTL Monitoring Problem}

A distributed program $\mathcal{P} = \{p_1,p_2,...,p_n \}$ is a set of $n$ processes which cooperate with each other in order to achieve a certain task. 
Distributed monitoring is less developed and more challenging than local monitoring: they involve designing a distributed algorithm that monitors another distributed algorithm.
In this work,  we assume that no two processes share a common variable.  Each process of the distributed system emits events at discrete time instances. Each event $\sigma$ is a set of actions denoted by some atomic propositions from the set $AP$. We denote $2^{AP}$ by $\Sigma$ and call it the alphabet of the system. We assume that the distributed system operates under the perfect synchrony hypothesis, and that each process sends and receives messages at \textit{discrete} instances of time, which are represented using identifier $t \in \mathbb{N}^{\geq 0}$. An event in a process $p_i$, where $1 \leq i \leq n$, is either

\begin{itemize}

\item  internal event (i.e. an assignment statement),

\item  message sent, where the local state of $p_i$ remains unchanged, or

\item  message received,  where the local state of $p_i$ remains unchanged. 

\end{itemize}

Since each process sees only a projection of an event to its locally observable set of actions, we use a projection function $\Pi_i$ to restrict atomic propositions to the local view of monitor $\mathcal{M}_i$ attached to process $p_i$, which can only observe those of process $p_i$. For atomic propositions (local to process $p_i$), $\Pi_i: 2^{AP} \rightarrow 2^{AP}$, and we denote $AP_i = \Pi_i (AP)$, for all $i =1...n$. For events, $\Pi_i :2^{\Sigma} \rightarrow 2^{\Sigma}$ and we denote $\Sigma_i = \Pi_i (\Sigma)$ for all $i= 1...n$. We assume that $\forall_{i, j \leq n, i \neq j}  \rimp AP_i \cap AP_j = \emptyset$ and consequently  $\forall_{i, j \leq n, i \neq j} \rimp \Sigma_i \cap \Sigma_j = \emptyset$. That is, events are local to the processes where they are monitored.
The system's global trace, $g = (g_1, g_2,..., g_n)$ can now be described as a sequence of pair-wise unions of the local events of each process's traces. We denote the set of all possible events in $p_i$ by $E_i$ and hence the set of all events of $P$ by $E_P = \bigcup_{i=1}^{n} E_i$. Finite traces over an alphabet $\Sigma$ are denoted by $\Sigma^{*}$, while infinite traces are denoted by $\Sigma^{\infty}$.

\begin{definition}(\textbf{LTL formulas \cite{Pnueli1977}}). The set of LTL formulas is inductively defined by the grammar
\[
\varphi ::=  true \mid p \mid \neg \varphi \mid \varphi \lor \varphi \mid X \varphi \mid F \varphi \mid G \varphi \mid \varphi U \varphi  
\]
where $X$ is read as next, $F$ as  eventually (in the future), 
$G$ as  always (globally), $U$ as until, and $p$ is a propositional variable.

\end{definition}

\begin{definition} (\textbf{LTL Semantics \cite{Pnueli1977}}). Let $ w = a_0 a_1.. . \in \Sigma^{\infty}$ be a infinite word with $i \in N$
being a position. Then we define the semantics of LTL formulas inductively as follows

\begin{itemize}

\item $w, i \models true$

\item $w, i \models \neg \varphi$ iff $w, i \not\models \varphi$

\item $w, i \models p$ iff $ p \in a_i$
 
\item  $w, i \models \varphi_1 \lor \varphi_2$ iff $w, i \models \varphi_1$ or $w, i \models \varphi_2$

\item $w, i \models F \varphi$ iff $ w, j \models \varphi$ for some $j \geq i$

\item $w, i \models G \varphi$ iff $ w, j \models \varphi$ for all $j \geq i$

\item $w, i \models \varphi_1 U \varphi_2$ iff $\exists_{k \geq i}$ with $w, k \models \varphi_2$ and $\forall_{i \leq l < k}$ with $ w, l \models \varphi_1$

\item  $w, i \models X \varphi$ iff $w, i+1 \models  \varphi$

\end{itemize}

\end{definition}

We now review the definition of three-valued semantics LTL$_3$ that is used to interpret common LTL formulas,
as defined in \cite{Bauer2011}. The semantics of  LTL$_3$ is defined on finite prefixes to obtain a truth value from the set $\mathbb{B}_3 = \{ \top, \bot, ?  \}$.

\begin{definition} (\textbf{LTL$_3$ semantics}). Let $u \in \Sigma^{*}$ denote a finite word.  The truth value of a LTL$_3$ formula $\varphi$ with respect to $u$, denoted by $[u \models \varphi]$, is an element of $\mathbb{B}_3$ defined as follows:

$$
[u \models \varphi]  = 
\begin{cases}
\top & \textrm{if $\forall \sigma\in \Sigma^{\infty} : u\sigma \models \varphi $} \\
 \bot & \textrm{if  $\forall \sigma\in \Sigma^{\infty} : u\sigma \not\models \varphi $} \\
 ? & otherwise
\end{cases}
$$

\end{definition}

According to the semantics of LTL$_3$ the outcome of the evaluation of $\varphi$ can be inconclusive (?). This happens if the so far observed prefix $u$ itself is insufficient to determine how $\varphi$ evaluates in any possible future continuation of $u$.

\begin{problem} (\textbf{The decentralized monitoring problem}).
Given a distributed program $\mathcal{P} = \{p_1, p_2,..., p_n\}$, a finite global-state trace $\alpha \in \Sigma^{*}$, an $LTL$ property $\varphi$, and a set of monitor processes $ \mathcal{M} = \{ M_1, M_2,..., M_n\}$ such that

\begin{itemize}

\item  monitor $M_i$ can read the local state of process $p_i$, and

\item  monitor $M_i$ can communicate with other monitor processes.


\end{itemize}
The problem is then to design an algorithm that allows each monitor $M_i$ to evaluate $\varphi$ through communicating with other monitor processes.
The problem can be studied under different settings and different assumptions.
However, in this work,
we make a number of assumptions
about the class of systems that can be monitored in our framework.

\end{problem}

\begin{itemize}

\item \textbf{A1}: the monitored system  is a synchronous timed system with a global clock;


\item \textbf{A2}: processes are reliable (i.e., no process is malicious).
\end{itemize}

It is interesting to note that the synchronous assumption imposed in our setting
is by no means unrealistic, as in many real-world systems,
communication occurs synchronously. We refer the reader to \cite{BauerF12,ColomboF14} 
in which the authors
discussed a number of interesting examples of protocols 
for safety-critical systems in which communication occurs synchronously. 

\section{Detecting Variables with Equivalent Logical Influences}

In this section, we discuss techniques that can be used to detect variables in a Boolean
formula or in an LTL formula whose logical influences on the outcome of the formula are equivalent.  Given a formula $\varphi$ with a set of propositional variables 
$prop(\varphi) = \{a_1,..., a_n\}$, we ask the following questions:

\begin{enumerate}

\item Does $\varphi$ contain  variables whose logical influences 
on the outcome of the formula are equivalent?

\item Can we develop tests to extract variables with equivalent logical influences?

\item Can we assign a value (a logical influence measure) to every variable in $\varphi$,
corresponding to its importance  in affecting the outcome of the formula?

\item Can we identify the variable that have the highest 
logical influence on the outcome of the formula $\varphi$?

\end{enumerate}


First we need to define what it means for variables to have equivalent logical influence. 
Consider the following simple propositional logic formula
$
\varphi = (a \lor (b \land c)).
$
Do variables $a$ and $b$ have equivalent logical influence?
Do variables $b$ and $c$ have equivalent logical influence? 
Which variable has the highest logical influence on the outcome of $\varphi$?
The answers to these questions depend on how the formula $\varphi$
is simplified under different truth assignments of its variables.
To answer questions (1-4) we introduce
what we call a formula simplification table which shows 
how the formula gets simplified under different truth assignments of the variables.
We first give a definition of formula simplification
table and then give some examples
by which we demonstrate how one can construct a simplification table for a formula.

\begin{definition} (\textbf{Formula simplification table}).
A simplification table is a mathematical table that shows 
all possible simplified forms of a given formula that result from different 
truth assignment of its variables. 
A simplification table has one column for each input variable, and one final column showing the simplified formula under the given combination of truth assignments. The variables take their truth values from the truth domain $\mathbb{B}_3= \{\bot, \top, ?\}$.  Each row of the table contains one possible configuration of the variables
and the formula that results from substituting truth values 
of the variables in the main formula.

\end{definition}

A simplification table for the formula $\varphi = (a \lor (b \land c))$
is given in Table \ref{table:EX1}. 
Before proceeding further, let us summarize the basic
rules that one needs to follow when construction a simplification table of a formula.

\begin{itemize}

\item Truth values of variables are taken from the truth domain $\mathbb{B}_3 =\{?, \bot, \top\}$.

\item Only variables with known truth values will be substituted in the formula.


\end{itemize}

\begin{table} 
 \centering
\begin{tabular}{|c|c|c|c|c|c|c|c|c|c|c|c|}  
\hline
a &b & c & Simplified formula & a &b & c & Simplified formula &a &b & c &  Simplified formula \\ 
\hline
? & ? & ? & $(a \lor (b \land c))$& $\top$ &?  &?  & $\top$& $\bot$& ? & ?& $(b \land c)$ \\ 
\hline
? & $\bot$ & ? & $a $&  $\top$ & ? & $\top$ & $\top$ & $\bot$& ?& $\bot$& $\bot$\\ 
\hline
? & $\bot$ & $\bot$ & $a$& $\top$ & ? & $\bot$ &$\top$ & $\bot$& ?& $\top$& $b$\\ 
\hline
? & $\bot$ & $\top$ &$a$ &$\top$  & $\bot$ & ? & $\top$& $\bot$& $\bot$&? & $\bot$\\ 
\hline
? & $?$ & $\top$ &$(a \lor b)$ &  $\top$& $\bot$ &$\bot$  &  $\top$& $\bot$& $\bot$& $\bot$& $\bot$\\ 
\hline
? & ? & $\bot$ &$a$ &  $\top$&  $\bot$& $\top$ &$\top$ & $\bot$& $\bot$& $\top$&$\bot$ \\ 
\hline
? & $\top$ & ? &$(a \lor c)$ & $\top$ & $\top$ & ? &$\top$ & $\bot$& $\top$& ?& $c$\\ 
\hline
? & $\top$ & $\bot$ &$a$ &  $\top$&  $\top$&  $\top$ &$\top$ & $\bot$& $\top$& $\top$& $\top$\\ 
\hline
? & $\top$ & $\top$ &$\top$ & $\top$ & $\top$ &  $\bot$& $\top$& $\bot$& $\top$& $\bot$& $\bot$\\ 
\hline
\end{tabular} 
\caption{A simplification table for the formula $\varphi = (a \lor (b \land c))$} \label{table:EX1}
\end{table}

The simplification table provides a rich source of information about the structure of formula and its simplifications under different truth assignments of its variables, that is not available from other data structures. In addition to providing key information about the importance of each variable in the formula, the table also allows one to detect variables
with equivalent logical influence and configurations that lead to the same simplified formula.
We first discuss the following two new notions: (1) variables with equivalent logical influences, and (2) the influence weight of a variable on the outcome of  the formula. 

\begin{definition} (\textbf{Variables with equivalent logical influences})\label{EquiVar}.
Two variables in a formula are said to be equivalent 
in their logical influences on the outcome 
of the formula if under the same truth assignment 
they yield formulas with identical syntactic structure.
Let $\varphi$ be a formula 
and $prop (\varphi)$ be the set of variables in $\varphi$.
We say that the two variables $a, b \in prop (\varphi)$
have equivalent logical influences on $\varphi$ 
(denoted as $a \equiv b$) if the following condition holds

$$
 \begin{array}[t] {l}
prog (\varphi, a = \bot) = rename (prog (\varphi, b = \bot), a, b) ~ \land \\
prog (\varphi, a = \top) =   rename (prog (\varphi, b = \top) , a, b)
\end{array}
$$

where $prog(\varphi, a = v)$ is a function that returns a new
formula of $\varphi$ after substituting the truth value of $a$ in $\varphi$ and $rename (prog (\varphi, b = \top) , a, b)$ is a function that replaces all instances of $a$ in $prog (\varphi, b = \top) $ to $b$ (i.e., changing the name of the variable $a$ to $b$). For example, $prog ((a \land b) , a = \top) = b$ and $rename (prog ((a \land b \land c), b = \top), a, b) = rename ((a \land c), a, b) = (b \land c)$.

\end{definition}

From the simplification table of $\varphi = (a \lor (b \land c))$ (Table \ref{table:EX1})
we note that the two variables $b$ and $c$ have equivalent logical influence on the outcome of $\varphi$  as $prog(\varphi, b = \bot) = rename (prog(\varphi, c = \bot), b , c)$
and  $prog(\varphi, b = \top) = rename (prog(\varphi, c = \top), b , c)$, 
while the variables $a$ and $b$ have inequivalent logical influence   
as  $prog(\varphi, a = \top) \neq rename (prog(\varphi, b = \top), a, b)$.

\begin{definition} (\textbf{Influence weights of variables})\label{InfluenceWight}.
The influence weight of a variable in a given formula
is a metric that shows the importance of the variable
in affecting the outcome of the formula. 
It can be computed from the simplification table
of the formula. Let $\varphi$ be a formula
and $prop(\varphi)$ be the set of variables of $\varphi$ 
and $a \in prop(\varphi)$. The influence weight of the variable $a$ 
(denoted as $IW_{\varphi}(a)$) can be computed by taking the ratio of
the number of formulas in the simplification table that $a$ appears in 
(let us denote by $f_a$) to the number  of truth combinations 
of the variables in which $a$ has unknown truth value ($a = ?$) 
(let us denote it by $C_{a = ?}$).
Hence, $IW_{\varphi}(a)$ can be computed as follows
$$
IW_{\varphi}(a) = \dfrac{f_a}{C_{a = ?}}
$$
\end{definition}

From Table \ref{table:EX1} we note that 
$IW_{\varphi}(a) = \dfrac{8}{9} $, $IW_{\varphi}(b) = \dfrac{4}{9} $,
and $IW_{\varphi}(c) = \dfrac{4}{9} $.
It is easy to see that the variable $a$ has higher logical influence
on the outcome of the formula than both $b$ and $c$.
This can be shown from the value of the influence weight of $a$
which is larger than the weights of both $b$ and $c$.
Note that the larger the influence weight of the variable, the more important the variable
(i.e., the variable has higher influence on the outcome of the formula). 
As we discuss later  there are several factors that can affect 
the influence weight of a variable in a given formula:
(a) the number of times the variable appears in the formula,
(b) the logical connectives used in the formula, and
(c) the length of the formula.




\begin{definition} (\textbf{Equivalent configurations}). 
Let $\varphi$ be a formula with a set of 
propositional variables $\{a_1,..., a_n\}$. 
We say that the two configurations $O = (a_1 = v_1,..., a_n = v_n)$
and $O^{'} = (a_1 = v_1^{'},..., a_n = v_n^{'})$ are equivalent
if they lead to the same simplified formula,
where $ (v_1,..., v_n, v_1^{'},..., v_n^{'}) \in \mathbb{B}_3$.
Formally, we say that the two configurations $O$ and $O^{'}$ are equivalent if

$$
 \begin{array}[t] {l}
 prog(...(prog(\varphi, a_1 = v_1), a_2 = v_2),...,a_n = v_n) = \\
 prog(...(prog(\varphi, a_1 = v_1^{'}), a_2 = v_2^{'}),...,a_n = v_n^{'})
 \end{array}
$$

\end{definition}

The simplification table of a formula can be used also to derive Boolean formulas
characterizing the conditions under which the main formula 
can be simplified into some specific formulas. 
Deriving such Boolean formulas can be very useful for certain problems 
in formal verification such as the decentralized LTL monitoring problem,
where processes can use such formulas to determine 
the minimal set of variables whose truth values
need to be propagated.
For example, for the formula  $\varphi = (a \lor (b \land c))$
one can see from the simplification table of $\varphi$
that there are multiple configurations that lead to the same
simplified formula. For instance, there are five different configurations
that simplify the formula to the atomic formula $\phi = a$. 
One can then derive a Boolean formula characterizing the cases
under which $\varphi$ can be simplified to $\phi$,
which will be in this case $\mathbb{B}_{\phi} = (\overline{b} + \overline{c}$).
Note that $\mathbb{B}_{\phi}$ is given here in its simplest form.

The technique can be used also for LTL formulas 
to compute the influence weights of variables in a given LTL formula. 
Note that for propositional logic formulas, we call the
table as simplification table since the formula
gets simplified once we substitute a truth value
of a variable in the formula (i.e., the size of the formula is reduced).
This is not always the case for temporal formulas, as
the formula may be expanded at each state of the trace 
to express sets of obligations (requirements)
that the system should fulfill for the remaining
part of the trace.
We therefore call the table as progression table
rather than simplification table when dealing with LTL formulas.
The key question is then how to deal with temporal operators
when constructing a progression table.
Let us construct a progression table for the temporal formula
$
\varphi = F (a \land b) \lor G (c \land d).
$

\begin{table} 
 \centering
  \adjustbox{max width=\textwidth}{
\begin{tabular}{|c|c|c|c|c|c|c|c|c|c|}  
\hline
$a^{(t)}$ &$b^{(t)}$ & $c^{(t)}$ & $d^{(t)}$&  Progressive formula & $a^{(t)}$ &$b^{(t)}$ & $c^{(t)}$ & $d^{(t)}$& Progressive formula \\ 
\hline
? & ? & ? &? & $( (a^{(t)} \land b^{(t)})\lor XF (a \land b)) \lor (c^{(t)} \land d^{(t)} \land XG(c\land d))$ &$\bot$ &? &? &? &   $ XF (a \land b) \lor ((c^{(t)} \land d^{(t)}) \land XG(c\land d))$   \\
\hline
? & ? & $\bot$ &? & $(a^{(t)} \land b^{(t)}) \lor XF (a \land b)$ & $\bot$& $\top$& ?& $\top$&  $XF (a \land b) \lor (d^{(t)} \land XG(c\land d))$    \\
\hline
? & ? & $\bot$ &$\bot$ & $(a^{(t)} \land b^{(t)}) \lor XF (a \land b)$ & $\bot$& $\bot$ &? &? &  $ XF (a \land b) \lor ((c^{(t)} \land d^{(t)}) \land XG(c\land d))$  \\
\hline
? & ? & $\top$ &$\bot$ & $(a^{(t)} \land b^{(t)}) \lor XF (a \land b)$ & $\bot$& $\top$ & ?& ?& $ XF (a \land b) \lor ((c^{(t)} \land d^{(t)}) \land XG(c\land d))$  \\
\hline
? & ? & $\bot$ &$\top$ & $(a^{(t)} \land b^{(t)}) \lor XF (a \land b)$ &$\bot$ & $\top$& ? & $\bot$&  $ XF (a \land b)$  \\
\hline
? & ? & ? &$\bot$ & $(a^{(t)} \land b^{(t)}) \lor XF (a \land b)$ & $\bot$& ?& ?& $\bot$& $ XF (a \land b)$   \\
\hline
? & ? & $\top$ &? & $(a^{(t)} \land b^{(t)})\lor XF (a \land b) \lor (d^{(t)} \land XG(c\land d))$ & $\bot$ &? &? & $\top$ &  $XF (a \land b) \lor (d^{(t)} \land XG(c\land d))$  \\
\hline
? & ? & ? &$\top$ & $(a^{(t)} \land b^{(t)})\lor XF (a \land b) \lor (c^{(t)} \land XG(c\land d))$ &$\bot$ & $\bot$& ? &$\bot$ & $ XF (a \land b)$ \\
\hline
? & ? & $\top$ &$\top$ & $(a^{(t)} \land b^{(t)})\lor XF (a \land b) \lor XG(c\land d)$ & $\bot$& $\bot$& ?& $\top$& $XF (a \land b) \lor (d^{(t)} \land XG(c\land d))$\\
\hline
? & $\bot$ & ? &? & $ XF (a \land b) \lor ((c^{(t)} \land d^{(t)}) \land XG(c\land d))$ &$\top$ &? &? &? & $b^{(t)} \lor XF (a \land b) \lor (c^{(t)} \land d^{(t)} \land XG(c\land d))$ \\
\hline
? & $\bot$ & ? &$\bot$ & $ XF (a \land b)$ & $\top$&? &? &$\bot$ &  $b^{(t)} \lor XF (a \land b)$\\
\hline
? & $\bot$ & $\bot$ &? & $ XF (a \land b)$ &$\top$ &? &? & $\top$&  $b^{(t)} \lor XF (a \land b) \lor (d^{(t)} \land XG(c\land d))$\\
\hline
? & $\bot$ & $\bot$ &$\bot$ & $ XF (a \land b)$ & $\top$& $\top$  & ?& $\top$& $\top$ \\
\hline
? & $\bot$ & $\bot$ &$\top$ & $ XF (a \land b)$ & $\top$&$\top$ & ?& ?& $\top$\\
\hline
? & $\bot$ & $\top$ &$\bot$ & $ XF (a \land b)$ & $\top$& $\top$& ?& $\bot$& $\top$   \\
\hline
? & $\bot$ & $\top$ &$\top$ & $ XF (a \land b) \lor XG (c \land d)$ &$\top$ & $\bot$ &? &? & $ XF (a \land b) \lor ((c^{(t)} \land d^{(t)}) \land XG(c\land d))$  \\
\hline
? & $\bot$ & $\top$ &$?$ & $ XF (a \land b) \lor (d^{(t)} \land XG (c \land d))$ &$\top$ &$\bot$ &? & $\bot$&  $XF(a \land b)$ \\
\hline
? & $\bot$ & $?$ &$\top$ & $ XF (a \land b) \lor (c^{(t)} \land XG (c \land d))$ & $\top$& $\bot$& ?& $\top$&   $ XF (a \land b) \lor ((c^{(t)}  \land XG(c\land d))$ \\
\hline
? & $\top$ & $\bot$ &$\bot$ & $ a^{(t)} \lor XF (a \land b)$ &? & $\top$ & $?$ &$\bot$ & $ a^{(t)} \lor XF (a \land b)$ \\
\hline
? & $\top$ & $\bot$ &? & $ a^{(t)} \lor XF (a \land b)$ &? & $\top$ & $?$ &? &  $(a^{(t)} \lor XF (a \land b)) \lor ((c^{(t)} \land d^{(t)}) \land XG(c\land d))$ \\
\hline
? & $\top$ & $?$ &$\top$ & $(a^{(t)} \lor XF (a \land b)) \lor (c^{(t)} \land XG(c\land d))$ & ? & $\top$ & $\top$ &$?$ & $(a^{(t)} \lor XF (a \land b)) \lor (d^{(t)} \land XG(c\land d))$  \\
\hline
? & $\top$ & $\top$ &$\top$ &  $ a^{(t)} \lor XF (a \land b) \lor XG (c \land d)$& ? & $\top$ & $\top$ &$\bot$ &  $ a^{(t)} \lor XF (a \land b)$  \\
\hline
? & $\top$ & $\bot$ &$\top$ &  $ a^{(t)} \lor XF (a \land b)$& &  & & &  \\
\hline
\end{tabular} }
\caption{A partial progression table for the formula $F (a \land b) \lor G (c \land d)$} \label{table:EX2}
\end{table}

Since we mainly use the progression table to measure the influence weights of the variables to the outcome of the formula, we choose to restrict the temporal
operators to specific time step $t\geq 0$ and use the classical
expansion rules to express the semantics of the operators
(i.e., $F(a) \equiv a \lor XF(a)$).
It is interesting to note that restricting temporal operators
to specific time step does not harm the analysis, it just simplifies it. 
From the definition of influence weights (see Definition \ref{InfluenceWight})
 it is sufficient then to consider the temporal operators at single step 
to compute the logical influence of variables to the outcome of the formula.

However, before constructing a progression table for the formula
we use Definition \ref{EquiVar} to detect variables in the formula
whose logical influences on the outcome of the formula are equivalent.
This would help to reduce the size of the table.
Using Definition \ref{EquiVar} we conclude
that $a \equiv b$ and $c \equiv d$ but $a \not \equiv c$.
We therefore have two sets of variables whose logical influences are equivalent: 
$E_1 = \{a, b\}$ and $E_1 = \{c, d \}$.
In this case we do not need to construct a full progression table
for the formula as $IW_{\varphi} (a)= IW_{\varphi} (b)$
and $IW_{\varphi} (c)= IW_{\varphi} (d)$.
We only need to compute the influence weights of the variables $a$ and $c$.

From the progression table of the formula 
$\varphi = F (a \land b) \lor G (c \land d)$ (Table \ref{table:EX2})
we can see that the variables $a$ and $b$ have higher logical influences
on the outcome of the formula than the variables $c$ and $d$,
where $IW_{\varphi} (a) = IW_{\varphi} (b) =  \dfrac{27}{27} = 1$ and $IW_{\varphi} (c) =  IW_{\varphi} (d) = \dfrac{18}{27} \approx 0.66$. 
This is mainly due to the semantics of the operators $F$ and $G$
and that the subformulas  $F (a \land b)$ and 
$G (c \land d)$ are connected using the logical connective $\lor$.
This leads to the conclusion that the set of
logical and temporal operators used in the formula affect the weights of the variables.


\begin{observation}
It is possible to have a variable in an LTL formula $\varphi$ whose influence weight is one. For example, for the formula $F (a \land b \land c)$
we notice that $a \equiv b \equiv c$ and that $IW_{\varphi} (a) = IW_{\varphi} (b) = IW_{\varphi} (c) = 1$.

\end{observation}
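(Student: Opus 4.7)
The plan is to witness the existence claim with the explicit formula $\varphi = F(a\land b\land c)$ and to verify the two sub-claims $a\equiv b\equiv c$ and $IW_\varphi(a)=1$ using Definitions~\ref{EquiVar} and~\ref{InfluenceWight}; the other weights then follow for free from the equivalence. First I would establish $a\equiv b\equiv c$ purely by the symmetry of $\land$: the conjuncts of $a\land b\land c$ are interchangeable, so for any $v\in\{\top,\bot\}$ the formulas $prog(\varphi,a=v)$ and $prog(\varphi,b=v)$ differ only by swapping the names of the surviving variables, which is exactly what the $rename$ function in Definition~\ref{EquiVar} absorbs. The same argument applies to any pair, so the three variables are mutually equivalent in logical influence.

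Next, I would compute $IW_\varphi(a)$ directly from the progression table. Using the expansion rule $F\psi\equiv\psi\lor XF\psi$ adopted earlier in the paper, the one-step progression of $\varphi$ at time $t$ is
\[
\varphi^{(t)} \;=\; (a^{(t)}\land b^{(t)}\land c^{(t)})\;\lor\;XF(a\land b\land c).
\]
The denominator $C_{a=?}$ equals the number of assignments to $(b,c)\in\mathbb{B}_3^{\,2}$, namely $3^2=9$. The key step is to argue that in every one of these nine rows, the progressed formula still syntactically contains the variable $a$, so $f_a=9$.

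For the numerator, I would split according to the value of the second conjunction. If $b$ or $c$ is $\bot$, the left disjunct collapses to $\bot$ and the row simplifies to $XF(a\land b\land c)$, which plainly contains $a$. Otherwise the left disjunct is either $(a^{(t)}\land\psi)$ for some $\psi$ involving $b^{(t)},c^{(t)}$ or $a^{(t)}$ itself (when $b=c=\top$); crucially, because $a=?$ the disjunction never collapses to $\top$, so the right disjunct $XF(a\land b\land c)$ is retained, again containing $a$. Putting the nine cases together gives $f_a=9$ and hence $IW_\varphi(a)=9/9=1$. Applying the equivalence established in the first step yields $IW_\varphi(b)=IW_\varphi(c)=1$.

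The main subtlety, and the only place where the argument could go wrong, is the bookkeeping of whether the $XF$-subterm is simplified away by any of the nine substitutions. This is why I would make explicit the observation that having $a=?$ already prevents the left disjunct from ever evaluating to $\top$, which is what guarantees the $XF$-subterm survives and therefore that $a$ appears in every row counted by $C_{a=?}$. Once this is pinned down, the weight computation is just $9/9$ and the observation follows.
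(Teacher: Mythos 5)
Your proposal is correct and matches what the paper intends: the paper states this Observation without an explicit proof, and your verification (mutual equivalence of $a,b,c$ by the symmetry of $\land$ under $rename$, then $IW_{\varphi}(a)=f_a/C_{a=?}=9/9$ because the $XF(a\land b\land c)$ disjunct can never be discharged while $a=?$) is exactly the computation the paper's conventions call for, consistent with its own analogous tally $IW_{\varphi}(a)=27/27=1$ for $F(a\land b)\lor G(c\land d)$ in Table~\ref{table:EX2}. Your explicit remark that $a=?$ prevents the left disjunct from evaluating to $\top$, so the $XF$-subterm (and hence $a$) survives in every counted row, is the one detail worth recording that the paper leaves implicit.
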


In addition to the length of the formula and the set of logical and temporal
operators used in the formula, the number of times the variable appears in the formula
 can affect its influence weight on the outcome of the formula.
Let us consider the following example to demonstrate this.

\begin{example}
Consider the following LTL formula $ \varphi = F (a \land b) \lor G (a \land c).
$
Using Definition \ref{EquiVar} we see that the formula $\varphi$ has no variables
with equivalent logical influences, where $a \not \equiv b$ and $a \not \equiv c$ and $b \not \equiv c$. Note that even the variables $a$ and $b$  appear within the scope of the $F$ operator and the variables  $a$ and $c$ appear within the scope of the $G$ operator. This is simply because the variable $a$ appears twice in the formula which makes it the most important variable in the formula.

\end{example}

\begin{algorithm}  [h]
\begin{algorithmic}[1]
 \caption{Algorithm for detecting variables with equivalent logical influence} 
 \State \textbf{Input}: $\varphi$
 \State \textbf{int} $k := 1$
 \State \textbf{Bool} $Equiv := \textit{false}$
 \ForEach {$a_i \in Var_{\varphi}$}
 \ForEach {$a_j \in (Var_{\varphi}\setminus a_i$)}
 \If  { $prog (\varphi, a_i = \top) = rename (prog (\varphi, a_j = \top), a_i, a_j) ~ \land $ 
 		 \State $prog (\varphi, a_i = \bot) = rename (prog (\varphi, a_j = \bot), a_i, a_j)$}
 	\State $E_k := \emptyset$
 	\State \textbf{add} $ a_j$ \textbf{to} $E_k$
 	\State \textbf{remove} $ a_j$ \textbf{from} $Var_{\varphi}$
	\State $Equiv := true$
 \EndIf
 \If{$Equiv = true$}
  	\State \textbf{add} $ a_i$ \textbf{to} $E_k$
	\State $Equiv := false$
  	\State $k++$
 \EndIf 
  \EndFor
 \EndFor
\label{alg:detectingEqVar}
\end{algorithmic}
\end{algorithm}

However, it is not possible to detect equivalent variables in large formulas
using the progression table due to the memory explosion problem
(i.e., the size of the table grows exponentially w.r.t. the number of variables).
It is therefore necessary to develop an algorithm that can be  used
to detect equivalent variables. Since we deal with formulas with Boolean variables
which take only two possible truth values, we can then develop an efficient
algorithm for detecting equivalent variables in a given formula as shown in Algorithm \ref{alg:detectingEqVar}. The algorithm takes advantage of the fact 
that the relation $\equiv$ is reflexive, symmetric, and transitive.

\section{Simplifications} \label{sec:simplificationRules}

When some variables are shown to be equivalent in their logical influences
w.r.t. the outcome of a formula, then some of these variables can be replaced by one representative. We now describe the basic steps that can be followed
to simplify a formula that contains variables
with equivalent logical influences.

\begin{enumerate}

\item Detect sets of variables in the formula
whose logical influences on the outcome of the formula are equivalent. 
This can be performed using Def. \ref{EquiVar}.

\item Fix the names of some variables (maybe 2-3 variables) in each derived set   
while replacing the names of the other variables to one of the fixed names.  

\item Reconstruct the formula using the new set of variables names.
This yields a formula with redundant variables.

\item Simplify the resulting formula by eliminating redundant variables.

\end{enumerate}

The resulting simplified LTL formula has the same syntactic structure
as the original formula  but in a reduced form, as the number of variables 
in the simplified formula is less than that of the original formula.

\begin{example}
Consider the following LTL formula
$$
\varphi =  G(a_1 \land a_2 \land ... \land a_{n_1}) \lor F (b_1 \land b_2 \land ... \land b_{n_2}).
$$
Clearly, the formula contains variables whose logical influences are equivalent.
To detect variables with equivalent  logical influences we use Definition \ref{EquiVar}. 
According to Definition \ref{EquiVar} the formula $\varphi$ has two sets of variables with equivalent  logical influences: $E_1 = \{a_1,..., a_{n_1}\}$ and $E_2 = \{b_1,..., b_{n_2}\}$.
Suppose that we choose to maintain the variables $a_1$ and $a_2$ from $E_1$
and replace the names of the other variables in $E_1$ by $a_1$ and $b_1$ and $b_2$
from $E_2$ and replace the names of the other variables in $E_2$ by $b_1$.
This yields the following formula
$$
\varphi^{'} =  G(a_1 \land a_2 \land a_1 \land... \land a_1) \lor F (b_1 \land b_2 \land b_1 \land... \land b_1).
$$

The formula $\varphi^{'}$ contains redundant variables and hence can be simplified to 
$$
\varphi_{R} = G(a_1 \land a_2) \lor F (b_1 \land b_2).
$$
In this case, we reduce the number of variables 
in the formula from $(n_1 + n_2)$ to 4 variables.
Such simplification helps to construct efficiently a simplification table
for $\varphi_R$ and draw some valid conclusions about $\varphi$ as
we shall discuss later.

\end{example}

The above described simplification rules lead to reduce the size
of formulas which contain variables with equivalent logical influences
from $n$ to $(n - (\sum_{i = 1}^{k} (|E_i|)) + 2 \times k)$,
where $n$ is the number of variables in the main formula,
and $k$ is the number of sets of variables whose logical influences are equivalent.

\section{From Simplified Formula to Original Formula} \label{sec:extensionRules}

We now describe the basic steps that can be followed 
to draw correct logical conclusions about the original formula
from the results obtained of the simplified formula.
Given an LTL formula $\varphi$ we simplify $\varphi$ 
to $\varphi_R$ by detecting and contracting variables with
equivalent logical influences as described at Section \ref{sec:simplificationRules}. 
Note that the simplified formula $\varphi_R$  contains only subset of the variables
of the original formula and hence conclusions
derived from the simplified formula need to be extended
while considering missing variables in the original formula
(i.e., variables that are in the original formula but not in the simplified formula).

\begin{enumerate}

\item Construct a progression table for the simplified formula $\varphi_R$.

\item Compute influence weights of the variables in the simplified formula $\varphi_R$.

\item Synthesize Boolean formulas for sets of configurations in the progression 
table of the formula $\varphi_R$ that yield the same LTL formula.

\item  Extend influence weights of the variables to the original formula $\varphi$.

\item Extend sets of synthesized Boolean formulas to the original formula $\varphi$.

\end{enumerate}

Note that steps (1-3) of the above procedure can be performed
as described in the previous section. 
We now describe how steps (4-5)
can be implemented by developing rules 
for extending logical conclusions  derived from the simplified formula.
Let $\mathbb{B}_{\phi}$ be a Boolean formula synthesized
from the progression table of the formula $\varphi_{R}$
for sets of configurations that yield the LTL formula $\phi$.
The general form of the  Boolean formula $\mathbb{B}_{\phi}^{\varphi_{R}}$ 
can be expressed as follows 
$$
\mathbb{B}_{\phi}^{\varphi_{R}} = (T_0 + T_1 + ... + T_n)
$$
where each term $T_i$ has the form $\prod (V)$ (a product of a set of variables),
where $V$ is a set of propositional variables from $prop(\varphi)$.
Let $\{E_1, ..., E_k \}$ be the sets of variables with equivalent logical
influence extracted from the formula $\varphi$.
Note that for each set $E_i$ we maintain only two variables
in the simplified formula.
Let us denote the variables maintained from the set $E_1$
 by $a_1$ and $a_2$ which we will use to formalize the extension rules given below.
Extending sets of Boolean formulas from the simplified formula
to the original formula can take one of the following forms:
(i)  extending $\mathbb{B}_{\phi}^{\varphi_{R}} $ by adding new 
variables to some terms in $\mathbb{B}_{\phi}^{\varphi_{R}} $, and
(ii)  extending $\mathbb{B}_{\phi}^{\varphi_{R}}$ by adding new terms to $\mathbb{B}_{\phi}^{\varphi_{R}} $.
The application of extension rules depends mainly
on the syntactic structure of the formula  $\mathbb{B}_{\phi}^{\varphi_{R}}$.

\begin{enumerate}

\item When none of the variables in the equivalent set $E_1$ appears in the formula $\phi$.
     That is, for all $a_i \in E_1$ we have $a_i \not \in prop(\phi)$.
     We have three cases here
      
      \begin{enumerate}
      
        
        \item if there exists a term $T_i$ in $\mathbb{B}_{\phi}^{\varphi_{R}}$
		such that $(|T.V| \geq 1 \land (T.V \cap E_1) = 1)$ then
		for each variable in $E_1$ that is not in the short formula	 $\varphi_{R}$	
		add a new term to $\mathbb{B}_{\phi}$ that is identical to $|T| $ 
		while replacing the variable $(T.V \cap E_1)$ by one from the
		set $E_1$ that is not in the short formula.
        
        \item if there exists a term $T_i$ in $\mathbb{B}_{\phi}^{\varphi_{R}}$
		such that $(|T.V| > 1 \land (T.V \cap E_1) = 2)$
		then add all variables in $E_1$ that is not in the short formula
		$\varphi_{R}$ to $V$.
        
        \item if none of the variables in $E_1$ appears in the terms of 
        	  $\mathbb{B}_{\phi}^{\varphi_{R}}$ then the formula  
        	  $\mathbb{B}_{\phi}^{\varphi_{R}}$ 
        	  needs not to be extended with respect to the set $E_1$.
        
      \end{enumerate}

\item When variables $a_1$ and $a_2$ appear in the formula $\phi$.
    We have two case here
     
  \begin{enumerate}
  

	\item   if  variables $a_1$ and $a_2$ appear in the formula $\phi$ 
	        but none of them appears in the terms of the
	         formula $\mathbb{B}_{\phi}^{\varphi_{R}}$. 
	        In this case, we need to extend
	        the formula $\phi$ by adding all variables in $E_1$ that are not in $\varphi_{R}$    		    to $\phi$ while preserving the syntactic structure of the formula $\phi$.

   \item  if variables $a_1$ and $a_2$ appear in the formula
	$\phi$ (i.e., $E_i \cap prop(\phi) = 2$) 
   and  in the formula $\mathbb{B}_{\phi}^{\varphi_{R}}$.
   Then the formula  $\mathbb{B}_{\phi}^{\varphi_{R}}$ will be extended in two steps
   (i) add all variables in $E_1$ that are not in $\varphi_{R}$ to $\phi$
    while preserving the syntactic structure of $\phi$,
   and (ii) use extension rules 1(a)-1(b) to extend the formula 
    $\mathbb{B}_{\phi}^{\varphi_{R}}$.

  \end{enumerate}
 
\end{enumerate}

\begin{theorem} \label{ExtensionTheorem}

Extension rules 1(a)-1(c) and 2(a)-2(b) are sound rules.

\end{theorem}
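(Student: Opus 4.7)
The plan is to prove soundness by a direct case analysis matching each extension rule, with every case reduced to one symmetry lemma that itself follows from Def. \ref{EquiVar}. Fix the original formula $\varphi$ with equivalence classes $\{E_1,\dots,E_k\}$ of variables under $\equiv$, and the simplified formula $\varphi_R$ obtained by keeping two representatives per class. Taking $\mathbb{B}_\phi^{\varphi_R}$ to be correct by construction on $\varphi_R$, soundness of the extension rules means that the extended pair $(\phi^{\ast},\mathbb{B}_\phi^{\varphi})$ satisfies the property that every assignment $O$ to the variables of $\varphi$ with $O \models \mathbb{B}_\phi^{\varphi}$ yields $prog(\varphi,O)=\phi^{\ast}$, where $\phi^{\ast}$ is $\phi$ when rule~2 does not apply and the rule~2 extension of $\phi$ otherwise.

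The key lemma to establish first is a symmetry statement: for any class $E_i$, any permutation $\pi$ of the names in $E_i$, and any assignment $O$, one has $prog(\varphi,\pi(O)) = rename(prog(\varphi,O),\pi)$. For transpositions this is exactly Def. \ref{EquiVar} applied parametrically in the remaining variables; for arbitrary permutations it follows from the fact that $\equiv$ is reflexive, symmetric, and transitive, as already exploited in Algorithm \ref{alg:detectingEqVar}. A direct corollary is a ``contraction commutes with progression'' statement that lets me pull any result computed from $prog(\varphi_R,\cdot)$ back to $prog(\varphi,\cdot)$ whenever the input assignment is constant on the contracted positions.

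With this machinery I would check each rule in turn. Rule~1(c) is immediate: if no variable of $E_1$ appears in any term of $\mathbb{B}_\phi^{\varphi_R}$, applying the symmetry lemma to the two kept representatives shows the truth values of $E_1$ are irrelevant to whether $\varphi$ simplifies to $\phi$, so no extension is needed. Rule~1(a) uses the lemma to rename the single occurring $E_1$-variable to each missing class member in turn, yielding new satisfying configurations whose union is exactly the added terms; no spurious configurations can be introduced because $\phi$ contains no $E_1$-variables. Rule~1(b) is the ``joint dependence'' case: since both kept representatives appear in the product $T$, the lemma forces the same sensitivity on every member of $E_1$, so $V$ must be augmented with all missing members, and doing so preserves exactness. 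Rules~2(a)--2(b) are handled by observing that once $a_1,a_2 \in prop(\phi)$, the target formula itself must be extended, because whenever $prog(\varphi_R,O_R)$ contains $a_1$ and $a_2$ the symmetry lemma forces $prog(\varphi,O)$ to contain the corresponding occurrences across all of $E_1$.

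The hardest step is rule~2(b), where $\phi$ is extended and $\mathbb{B}_\phi^{\varphi_R}$ is simultaneously extended by rules~1(a)--1(b). Here I must verify that the two extensions do not double-count configurations and that the resulting pair is exact rather than merely an over-approximation. I expect this to follow by factoring the argument through the symmetry lemma applied separately to the ``$\phi$-side'' variables and the ``$\mathbb{B}$-side'' variables of $E_1$, but the bookkeeping is delicate; essentially all of the technical effort in the proof should concentrate there, since every other case reduces to a single application of the lemma.
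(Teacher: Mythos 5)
Your proposal is correct in outline and follows the same basic route as the paper's proof: a case analysis over rules 1(a)--1(c) and 2(a)--2(b), with each case justified by the fact that the contracted variables are $\equiv$-equivalent to the two retained representatives $a_1,a_2$. The difference is one of scaffolding. The paper's proof simply restates each rule and asserts soundness from $a_3 \equiv a_1 \equiv a_2$; it never says what ``sound'' means for an extension rule, and never states the bridge between the single-substitution condition of Definition~\ref{EquiVar} and the behaviour of full configurations. You make both explicit: you define soundness as exactness of the extended pair $(\phi^{\ast},\mathbb{B}_{\phi}^{\varphi})$ over all assignments, and you isolate a permutation-symmetry lemma ($prog(\varphi,\pi(O)) = rename(prog(\varphi,O),\pi)$ for $\pi$ permuting a class $E_i$) as the single fact every case reduces to. That lemma is the right crux, and your observation that rule 2(b) is the composite case needing a non-double-counting argument is a point the paper glosses over entirely. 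One caution: your claim that the lemma for transpositions is ``exactly Definition~\ref{EquiVar} applied parametrically in the remaining variables'' still needs an argument, because the definition only constrains the \emph{first} substitution into $\varphi$; to lift it to arbitrary assignments you must additionally show that $a \equiv b$ is preserved under progression by other variables (i.e., that equivalence in $\varphi$ implies equivalence in $prog(\varphi, c = w)$), and that successive substitutions commute. The paper's proof silently assumes the same thing, so this is not a defect relative to the published argument, but if you want your version to be genuinely more rigorous, that preservation step is where the remaining work lies.
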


\begin{proof}

The proof of the Theorem can be constructed by case analysis,
where the shape (the syntactic structure) of the  Boolean formula
determines the way the formula will be extended.
Let $\varphi$ be an LTL formula and $\varphi_{R}$ 
be a simplified form of $\varphi$ obtained by detecting
and contracting equivalent variables in $\varphi$
as described at Section \ref{sec:simplificationRules}.
Let $E = \{a_1, a_2,..., a_n\}$ be a set of variables of $\varphi$ whose 
logical influences on the outcome of $\varphi$ are equivalent. 
Suppose that we maintain
two variables from $E$ in the simplified formula 
$\varphi_{R}$, let us denote them by $a_1$ and $a_2$.
Let $\mathbb{B}_{\phi}$
be a Boolean formula synthesized from the progression
table of $\varphi_{R}$ that we aim to extend to
the original formula $\varphi$.
Note first that the general form of the  Boolean formula $\mathbb{B}_{\phi}$ 
can be expressed as follows 
$$
\mathbb{B}_{\phi} = (T_0 + T_1 + ... + T_n)
$$
where each term $T_i$ has the form $\prod (V)$ (a product of a set of variables),
 $V$ is a set of propositional variables from $prop(\varphi)$, and $\phi$ is an LTL formula.
From the syntactic structure of the formula $\mathbb{B}_{\phi}$,
one can see that the extension of $\mathbb{B}_{\phi} $ to the original
formula can take one of the following forms:
(i)  extending $\mathbb{B}_{\phi} $ by adding new 
variables to some terms in $\mathbb{B}_{\phi} $, and
(ii)  extending $\mathbb{B}_{\phi} $ by adding new terms to $\mathbb{B}_{\phi} $.
The extension of  $\mathbb{B}_{\phi} $ depends on the way the variables $a_1$ and $a_2$
appear in $\mathbb{B}_{\phi} $, since the other variables that are
not appeared in the simplified formula are equivalent to variables $a_1$ and $a_2$ 
in their logical influences on the outcome of the formula.
There are two main cases to consider here

\begin{enumerate}

\item when none of the variables $a_1$ and $a_2$ appears in the formula $\phi$.
In this case, the extension of $\mathbb{B}_{\phi} $ depends on the appearance
of variables $a_1$ and $a_2$ in  $\mathbb{B}_{\phi} $. 
The extension will be proceeded in an iterative way
by examining the terms of the formula $\mathbb{B}_{\phi} $.
For this case, there are several sub-cases to consider 

\begin{enumerate}

\item if there exists a term $T_i$ in $\mathbb{B}_{\phi} $
where both variables $a_1$ and $a_2$ are in $T_i.V$
then the variable $a_3$ must be added to the list $V$,
where $a_3$ is a variable in the original formula but not
in the simplified formula whose logical influence
to outcome of the formula is equivalent to $a_1$ and $a_2$.
It is easy to see the soundness of this rule
as $a_3 \equiv a_1 \equiv a_2$.

\item if there exists a term $T_i$ in $\mathbb{B}_{\phi} $
where only variable $a_1$ or $a_2 $ appears in $T_i.V$.
Then a new term will be added to  $\mathbb{B}_{\phi} $
 with the same syntactic structure as  $T_i$
 while replacing the instance of $a_1$ or $a_2 $
 by $a_3$. Again this is due to the observation 
 that $a_3 \equiv a_1 \equiv a_2$ and hence they 
 have the same logical influence on the outcome
 of the formula.

\item if neither $a_1$ nor $a_2$ appears in any of the terms in $\mathbb{B}_{\phi} $.
Then obviously none of the missing variables that have equivalent 
influence on the outcome of the  formula $\varphi$ 
will appear in the terms of $\mathbb{B}_{\phi} $.

\end{enumerate}


\item when variables $a_1$ and $a_2$ appear in the formula $\phi$.
There are two cases here

\begin{enumerate}

\item if $a_1$ and $a_2$ appear in $\phi$ but none of them appear in $\mathbb{B}_{\phi} $. 
In this case the variable $a_3$ must be added to formula $\phi$,
where $a_3$ is a variable in the original formula but not
in the simplified formula whose logical influence
on outcome of the formula is equivalent to $a_1$ and $a_2$.
However, since  $a_1$ and $a_2$ do not appear in $\mathbb{B}_{\phi} $
then $\mathbb{B}_{\phi} $ needs not to be extended w.r.t. $E$.

\item if $a_1$ and $a_2$ appear in $\phi$ and appear in $\mathbb{B}_{\phi} $. 
In this case the formula $\mathbb{B}_{\phi} $ will be extended
into two steps: (i) variable $a_3$ must be added to formula $\phi$,
and (ii) the terms of $\mathbb{B}_{\phi} $ will be extended
using rules 1(a) and 1(b). This is mainly
because $a_3 \equiv a_1 \equiv a_2$ and hence they have equivalent logical influence
on the outcome of the main formula.

\end{enumerate}

\end{enumerate}

\end{proof}

Note that we synthesize Boolean formulas only
for sets of configurations in formula progression table
that yield same LTL formulas, and hence lot of  
configurations will not be considered when extending formulas.
We consider here the cases that maybe 
encountered during analysis. 
The extension rules are in general
straightforward rules as we deal with 
variables whose logical influences on the outcome
are equivalent.
However, one may need to develop further rules depending
on the syntactic structure of synthesized
Boolean formulas from simplified formula.
We now discuss some basic properties of influence weights of variables
and some useful lemmas that can be
used to simplify the computation of influence weights
of variables in formulas with large number of variables.

\begin {definition} (\textbf{Properties of influence weights of variables}.)
Let $\varphi$ be an LTL formula with set of variables 
$prop(\varphi)  = \{a_1,..., a_n\}$. 
The basic properties of logical influence weights 
of $ \{a_1,..., a_n\}$ can be summarized as follows

\begin{enumerate}

\item for any variable $a_i \in prop(\varphi) $ we have $0 \leq IW_{\varphi} (a_i) \leq 1$. 

\item  when $a_i \equiv a_j$ then $IW_{\varphi} (a_i) = IW_{\varphi} (a_j)$ but the converse in not true. 

\item when $IW_{\varphi} (a_i) > IW_{\varphi} (a_j)$ we say that the variable $a_i$ 
has higher logical influence on the outcome of $\varphi$ than the variable $a_j$.

\item when $IW_{\varphi} (a_i) = 1$ we say that $a_i$ is a variable
of weight one in the sense that a definite truth value of $\varphi$ 
cannot be obtained without knowing $a_i$. 

\end{enumerate}

\end{definition}

Variables of weight one are key variables
in the formula as satisfaction/falsification
cannot be determined without knowing their truth values.
Therefore, variables of weight one should
receive higher priority than variables of weight less than one
 when considering solutions that are sensitive to variable ordering.
 For example the size of a Boolean Decision Diagram (BDD) for a given Boolean 
 function is sensitive to the ordering of the variables in the BDD.

\begin{lemma} \label{VarOfWeightOne}

Let $\varphi$ be an LTL formula with a set of propositional variables $prop(\varphi) = \{a_1,..., a_n\}$.
Let also $\varphi_{R}$ be a simplified version of $\varphi$ computed
as described at Section \ref{sec:simplificationRules}.
Then when $IW_{\varphi_{R}} (a_i) = 1$  we have $IW_{\varphi} (a_i) = 1$ as well.
\end{lemma}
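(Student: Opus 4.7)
The plan is to prove the lemma by contraposition: assume $IW_{\varphi}(a_i) < 1$ and derive $IW_{\varphi_R}(a_i) < 1$, contradicting the hypothesis. First, I would unpack Definition \ref{InfluenceWight} together with property (4): the statement $IW_{\varphi_R}(a_i) = 1$ means that for every configuration $O'$ of $prop(\varphi_R)$ with $O'(a_i) = \, ?$, the variable $a_i$ still appears syntactically in $prog(\varphi_R, O')$. Assuming $IW_{\varphi}(a_i) < 1$ gives, by the same characterisation, a configuration $O$ of $prop(\varphi)$ with $O(a_i) = \, ?$ such that $a_i \notin prog(\varphi, O)$.

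Next, I would construct a corresponding configuration $O'$ on $prop(\varphi_R)$ by restriction: set $O'(v) = O(v)$ for every $v \in prop(\varphi_R) \setminus \{a_i\}$, and $O'(a_i) = \, ?$. This is well-defined because $prop(\varphi_R) \subseteq prop(\varphi)$ by the construction of $\varphi_R$ in Section \ref{sec:simplificationRules}. The goal then reduces to showing that $a_i \notin prog(\varphi_R, O')$, which would directly contradict the hypothesis that $IW_{\varphi_R}(a_i) = 1$.

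To justify the final implication, I would invoke a structural lemma (proved by induction on the syntax of $\varphi$) stating that the simplification transformation $\varphi \mapsto \varphi_R$ commutes with progression at the level of equivalence-class representatives. Concretely: for any equivalence class $E_j$ of $\varphi$ with the two representatives $\{a_1^j, a_2^j\}$ retained in $\varphi_R$, any truth assignment to the non-representative variables of $E_j$ in $\varphi$ yields, through idempotence and absorption, the same syntactic simplifications on $\varphi$ (relative to occurrences of $a_1^j, a_2^j$) as the restricted assignment does on $\varphi_R$. Definition \ref{EquiVar} is the main tool here, since it guarantees exactly the syntactic interchangeability of equivalent variables under substitution. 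Applying this lemma to $O$ and $O'$, the variable $a_i$ occurs in $prog(\varphi_R, O')$ if and only if it occurs in $prog(\varphi, O)$, so its absence from the latter forces its absence from the former.

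The main obstacle is the auxiliary commutativity lemma itself. The Boolean cases are routine by substitution, but the temporal cases ($X$, $F$, $G$, $U$) require care: progression expands each temporal subformula via rules such as $F\psi \equiv \psi \lor XF\psi$, and one must verify that when equivalent variables are collapsed to a pair of representatives the residual $X$-subformulas in $prog(\varphi_R, O')$ line up syntactically with those in $prog(\varphi, O)$ after accounting for the removed duplicates. Once this commutativity is in place, the contraposition goes through cleanly and the lemma follows.
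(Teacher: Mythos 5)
Your overall strategy (contraposition via a witness configuration $O$ for $\varphi$ and its restriction $O'$ to $prop(\varphi_R)$) hinges entirely on the auxiliary ``commutativity'' lemma, and that lemma is false as stated. The restriction of a configuration does not preserve which variables survive progression, because a \emph{definite} truth value assigned by $O$ to a dropped, non-representative variable can trigger a simplification of $\varphi$ that has no counterpart in $\varphi_R$. Concretely, take $\varphi = a \lor (b_1 \land b_2 \land b_3)$ with equivalence class $E_1 = \{b_1, b_2, b_3\}$ and $\varphi_R = a \lor (b_1 \land b_2)$, and let $O = (a = ?,\ b_1 = ?,\ b_2 = ?,\ b_3 = \bot)$. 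Then $prog(\varphi, O) = a$, in which $b_1$ does not occur, while the restriction $O' = (a = ?,\ b_1 = ?,\ b_2 = ?)$ gives $prog(\varphi_R, O') = a \lor (b_1 \land b_2)$, in which $b_1$ does occur. So the biconditional ``$a_i$ occurs in $prog(\varphi_R, O')$ iff it occurs in $prog(\varphi, O)$'' fails, and it fails in exactly the direction your contraposition needs, namely $a_i \notin prog(\varphi,O) \Rightarrow a_i \notin prog(\varphi_R,O')$. No induction on syntax will rescue the claim in this form; the failure is not in the temporal cases you worry about but already in the purely propositional ones.

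The underlying problem is that your map $O \mapsto O'$ simply forgets the values of the dropped variables, whereas the information you need to transport is that the simplification caused by $b_3 = \bot$ is \emph{reproducible} inside $\varphi_R$ by assigning $\bot$ to a retained representative --- which is what $b_3 \equiv b_1 \equiv b_2$ actually buys you. A corrected argument would have to remap the definite values of dropped variables onto the retained representatives, and that can conflict with the values $O$ already assigns to them (e.g.\ $b_1 = \top,\ b_2 = \top,\ b_3 = \bot$), so you would additionally need a monotonicity argument that such a re-assignment only removes further occurrences of $a_i$. The paper sidesteps the restriction direction entirely: it argues in the opposite direction, adding the missing equivalent variables back to $\varphi_R$ one at a time and claiming that each addition cannot delete $a_i$ from any row in which it survives. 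That proof is itself informal, but it places the burden on the right step; as written, your proof has a genuine gap at its central step.
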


\begin{proof}

From the definition of influence weighs of variables (Definition \ref{InfluenceWight})
and that $IW_{\varphi_{R}} (a_i) = 1$ we notice that the presence of all variables
in the formula $\varphi_{R}$ do not affect the weight of the variable $a_i$.
That is, the variable $a_i$ appears in all simplified formulas in the progression
table of $\varphi_{R}$ that result from the truth combinations in which $a_i = ?$. Note that the formula $\varphi_{R}$ is a simplified version of
$\varphi$ ($\varphi_{R}$ has the same syntactic structure of $\varphi$ but in a short from) in which two variables from each list of variables with equivalent logical influence from the  formula $\varphi$ are maintained. 
Let $E_k$ be a list of variables with equivalent logical influence
derived from the formula $\varphi$ and that variables $b_1, b_2 \in E_k$ 
have been chosen to be  maintained in $\varphi_{R}$. 
It is easy to see that adding any new variable $b_j \in E_k$ to $\varphi_{R}$
such that $b_j \equiv b_1 \equiv b_2$ will not affect the influence weight of $a_i$
as $b_3$  has equivalent logical influence to $b_1$ and $b_2$ 
and hence  $IW_{\varphi} (a_i) = 1$.

\end{proof}

\begin{theorem} \label{VarOfWeightLessOne}

Let $\varphi$ be an LTL formula with a set of propositional variables $prop(\varphi)  = \{a_1,..., a_n\}$.
Let $\varphi_{R}$ be a simplified formula of $\varphi$ computed
as described at Section \ref{sec:simplificationRules}.
Suppose that all variables in $\varphi$ have  equivalent logical influence 
on the outcome of $\varphi$ and that $IW_{\varphi_{R}} (a_1) =  \frac{N}{D}$,
where $N$ is the denominator of the fraction and $D$ is the denominator of the fraction. 
Then $ IW_{\varphi} (a_1)= \frac{N^{n-1}}{D^{n-1}}$.

\end{theorem}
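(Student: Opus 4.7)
The plan is to peel off the definition $IW_\varphi(a_1) = f_{a_1}/C_{a_1=?}$ and match the two sides of the claimed equation separately. Because all $n$ variables lie in a single equivalence class, the simplification procedure of Section~\ref{sec:simplificationRules} keeps exactly two representatives $a_1,a_2$ in $\varphi_R$, so $\varphi_R$ is a two-variable formula. Consequently $C_{a_1=?}=3^{1}=3$ for $\varphi_R$ and $C_{a_1=?}=3^{n-1}$ for $\varphi$. Reading the hypothesis $IW_{\varphi_R}(a_1)=N/D$ against this count forces $D=3$, which makes $D^{n-1}=3^{n-1}$ coincide with the denominator of $IW_\varphi(a_1)$; the denominator half of the theorem is then automatic.

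The substantive content is counting $f_{a_1}^{\varphi}$. I would introduce the \emph{survival set} $S\subseteq\{?,\bot,\top\}$ consisting of those values $v$ for $a_2$ under which $a_1$ still occurs in $prog(\varphi_R, a_1=?, a_2=v)$; by definition $|S|=N$. The goal is then to show that the configurations $(v_2,\dots,v_n)\in\{?,\bot,\top\}^{n-1}$ for which $a_1$ survives in $prog(\varphi,a_1=?,a_2=v_2,\dots,a_n=v_n)$ are \emph{exactly} those with $v_j\in S$ for every $j\geq 2$. This product characterisation gives $|S|^{n-1}=N^{n-1}$, matching the numerator in the claim.

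The product characterisation itself is proved by induction on $n$. The base $n=2$ is the defining property of $S$. For the inductive step, pick some variable (say $a_n$) and substitute first. If $v_n\notin S$ then $a_1$ has already disappeared and no later substitution can reintroduce it, contributing $0$. If $v_n\in S$, the residual formula $\varphi'=prog(\varphi,a_n=v_n)$ retains $n-1$ pairwise equivalent variables (using that $\equiv$ is reflexive, symmetric, and transitive, as in Algorithm~\ref{alg:detectingEqVar}), and its simplification inherits the same survival set $S$. The induction hypothesis then gives $|S|^{n-2}$ surviving configurations, and summing over the $|S|$ admissible values of $v_n$ produces $|S|^{n-1}=N^{n-1}$ as required.

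The principal obstacle is justifying two stability claims used in the inductive step: (i) that the equivalence of the remaining variables is preserved by partial substitution, and (ii) that the survival set $S$ is invariant under such substitutions. Both must be extracted from Definition~\ref{EquiVar}, which only tells us that $prog(\varphi,a_i=v)$ and $prog(\varphi,a_j=v)$ agree up to $rename$. Propagating this local equivalence through iterated applications of $prog$ while tracking how $rename$ commutes with further substitutions is the delicate part; for nicely symmetric connectives this is routine, but a careful structural argument on $\varphi$ is needed to cover the general LTL case handled by the progression table.
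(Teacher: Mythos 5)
Your proposal is correct in spirit but takes a genuinely more substantive route than the paper. The paper's own proof only does two things: it observes that the denominator is $3^{n-1}$ because each of the $n-1$ remaining variables ranges over $\mathbb{B}_3$ (so $D=3$ for the two-variable $\varphi_R$), and it then asserts that the numerator is $N^{n-1}$ ``since the progression table grows exponentially and all variables are equivalent,'' with no counting argument. You supply exactly the missing combinatorial content: the \emph{survival set} $S$ with $|S|=N$, the product characterisation that $a_1$ survives in $prog(\varphi,a_1=?,a_2=v_2,\dots,a_n=v_n)$ iff every $v_j\in S$, and the induction on $n$ that reduces the count to $|S|^{n-1}$. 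This is the right way to make the theorem precise, and it checks out on the paper's own examples (e.g.\ $a_1\land a_2\land a_3$ gives $S=\{?,\top\}$, hence $4/9=2^2/3^2$). What your approach buys is that it isolates the two stability claims --- preservation of pairwise equivalence and of $S$ under partial substitution --- as the actual load-bearing assumptions; the paper silently assumes both. They are not vacuous: for formulas where the contraction step collapses $\varphi_R$ to fewer than two variables (e.g.\ the majority function $(a_1\land a_2)\lor(a_2\land a_3)\lor(a_1\land a_3)$, whose three variables are pairwise equivalent yet whose contraction degenerates to a single literal), both the denominator claim $D=3$ and the product characterisation fail, so the theorem only holds for the restricted syntactic class (conjunctions/disjunctions of an equivalence class under a temporal operator) that Section~\ref{sec:simplificationRules} implicitly targets. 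Your honest flagging of this as the delicate part is a strength, not a defect; if you restrict to that syntactic class the stability claims follow by direct inspection of the expansion rules, and your argument then closes completely, whereas the paper's does not.
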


\begin{proof}

From the definition of influence weighs of variables 
(Definition \ref{InfluenceWight}) we know that $IW_{\varphi_{R}} (a_i)$
is a fraction of the form  $\frac{N}{D}$ and that $N \leq D$,
 where the numerator  $N$ represents
the number of formulas in the progression table of the formula $\varphi_{R}$
that $a_1$ appears in and the denominator $D$ 
represents the number of truth combinations of the variables
 of $\varphi_{R}$ in which $a_1 = ?$.
Note that the denominator of the fraction has always the form $3^{n-1}$ 
and hence $D = 3$ regardless of the syntactic structure of the formula.
This is mainly because the variables 
take their truth values from the truth domain $\mathbb{B}_3 = \{\bot, \top, ?\}$.
Note that since all variables in $\varphi$ have equivalent logical
influence on the outcome of the formula 
then $\varphi$ can be simplified to a formula $\varphi_{R}$
with only two variables, let us denote them by $a_1$ and $a_2$. 
However, since the progression table grows exponentially
w.r.t. the number of variables and that all variables in the formula $\varphi$ 
have equivalent logical influence (i.e., $a_1 \equiv a_2 \equiv ... \equiv a_n$)
then it is easy to see that $ IW_{\varphi} (a_1)= \frac{N^{n-1}}{3^{n-1}}$,
where $N$ is the numerator of $IW_{\varphi_{R}} (a_1)$ 
and $n$ is the number of variables in $\varphi$.

\end{proof}






Lemma \ref{VarOfWeightOne} states that variables of weight one
do not get influenced by adding more variables to the formula
as long as the syntactic structure of the formula is preserved.
On the other hand, Theorem \ref{VarOfWeightLessOne} states 
that for formulas whose variables are equivalent in their
logical influences then the influence weights of these variables 
can be computed in  a straightforward way using the formula 
 $ IW_{\varphi} (a_i)= \frac{N^{n-1}}{3^{n-1}}$,
 where $a_i$ is a variable in $\varphi$,
 $N$ is the numerator of the fraction $IW_{\varphi_{R}} (a_1)$
 and $n$ is the number of variables in the original formula $\varphi$.

 \begin{example}

Consider the following LTL formula
$$
\varphi = F (a_1 \land a_2 \land a_3 \land a_4 \land a_5) \lor G (b_1 \land b_2 \land b_3 \land b_4)
$$
Note that $\varphi$ has two sets of variables with equivalent logical behavior:
$E_1 = \{a_1, a_2, a_3, a_4, a_5 \}$ and $E_2 = \{b_1, b_2, b_3, b_4 \}$.
Using the simplification rules described at Section \ref{sec:simplificationRules}
we can simplify $\varphi$ to $\varphi_R = F (a_1 \land a_2) \lor G (b_1 \lor b_2)$.
The progression table of the reduced formula is given in Table \ref{table:EX2}.
We consider here the Boolean formulas for the the cases
of configurations that lead to the simplified formulas $XF (a_1 \land a_2)$ and $\top$.
The expressions can be given as follows
$$
\mathbb{B}_{(XF (a_1 \land a_2))}^{\varphi_{R}}   = \sum_{i =1..2, j = 1..2} (\overline{a_i}.\overline{b_j}) \hspace*{40pt} 
\mathbb{B}_{\top}^{\varphi_{R}} = \prod_{i=1..2} (a_i)
$$

Extending the Boolean expression $\mathbb{B}_{(XF (a_1 \land a_2))}^{\varphi_{R}} $ to the original formula can be performed using rule 2(c), while extending the expression $\mathbb{B}_{\top}^{\varphi_{R}} $ to the original formula
can be performed using rule 1(b) which yield the following formulas 

$$
\mathbb{B}_{(XF (a_1 \land a_2 \land a_3 \land a_4 \land a_5))}^{\varphi}   = \sum_{i =1..5, j = 1..4} (\overline{a_1}.\overline{b_j}) \hspace*{30pt} 
\mathbb{B}_{\top}^{\varphi}  = \prod_{i=1..5} (a_i)
$$

Note that the influence weights of the variables $a_1, a_2, a_3, a_4,$ and $a_5$
will be the same since their logical influences on the outcome of the formula are equivalent.
From the progression table of the simplified formula
we note that $IW_{\varphi_{R}} (a_1) = IW_{\varphi_{R}} (a_2) = 1$
and $IW_{\varphi_{R}} (b_1)= IW_{\varphi_{R}} (b_2) = 0.66$. 
From Lemma \ref{VarOfWeightOne} we conclude that 
$IW_{\varphi} (a_1) = IW_{\varphi} (a_2) = 1$
and from Theorem \ref{VarOfWeightOne} we conclude that
$IW_{\varphi} (b_1)= IW_{\varphi} (b_2) \approx 0.039$.  
\end{example}

\section{Using Progression Table in Decentralized  Monitoring} \label{sec:DRVAlgori}

The great challenge in developing efficient decentralized 
framework for distributed systems is to decide: 
(i) which process communicates to which,
 (ii) when they communicate, and (iii) what they communicate.
 To address these challenges when monitoring a formula,
 we construct first a  progression
table for the monitored formula from which
we compute the influence weights of each 
variable in the formula
and derive some Boolean formulas
for the sets of configurations
that yield the same simplified LTL formula.
The extracted information is used for two purposes: 
(i) to synthesize efficient communication strategy for processes,
and (ii) to propagate observations of processes in an efficient way.
For each process, 
we associate what we call process influence logical factor.
Such factor can be computed according to the observation power of the process
(i.e., the set of variables in the formula that are locally observable by the process).

\begin{definition} \label{InfluenceFactor} (\textbf{Influence factors of processes}.)
Let $P$ be a distributed system with $n$ processes $\{p_0,.., p_{n-1} \}$
and $\varphi$ be an LTL property of $P$ that we seek to monitor in a decentralized fashion. 
Let $p_i \in P$ be a process with a set of atomic propositions $AP_i = \{a_1,..., a_k\}$
and that $AP_i \subseteq prop (\varphi)$.
The influence factor of process $p_i$ (denoted as $IF_{\varphi} (p_i) $) 
can be computed as follows
$$
IF_{\varphi} (p_i) = \sum_{j = 1}^{k} (IW_{\varphi} (a_j)).
$$
That is, the influence logical factor of a process can be computed
by taking the sum of the logical weights of the variables observable by that process.
\end{definition}

Using Definition \ref{InfluenceFactor} we can then
synthesize an efficient round-robin communication policy 
for processes according to their observation power.
In our setting, processes with higher influence factor will receive higher
priority in the order of communication.
This is mainly because processes with higher influence factors
they either observe larger number of variables
of the  monitored formula or variables with higher
influence weights and hence their ability to simplify
the formula are higher than those with lower influence factors.

\begin{example}

Suppose that we would like to monitor a formula 
$\varphi = F (b \lor (a_1 \land a_2 \land c))$
and that we have three processes: process $A$ with $AP_A = \{a_1, a_2\}$,
 process $B$ with $AP_B = \{b\}$, and process $C$ with $AP_C = \{c\}$. 
To synthesize an efficient round-robin communication policy for processes
we use Definition \ref{InfluenceFactor} to compute their influence factors.
We first need to compute the logical influence weight of each variable in the formula.
This can be computed by constructing a progression table for the formula $\varphi$.
From the progression table of the formula we find that
$
IW_{\varphi} (a_1) = IW_{\varphi} (a_2) =IW_{\varphi} (c) = \frac{8}{27}$
 and $IW_{\varphi} (b) = \frac{26}{27}.
$
From these values we can see that the influence factors of processes are: $IF_{\varphi} (A) = \frac{16}{27}$, $IF_{\varphi} (B) = \frac{26}{27}$, and $IF_{\varphi} (C) = \frac{8}{27}$. 
However, since $IF_{\varphi} (B) > IF_{\varphi} (A) > IF_{\varphi} (C)$
then the round-robin policy will be of the form
$
(B \rightarrow A \rightarrow C \rightarrow B)
$,
where the direction of the arrows represents the order of communication.

\end{example}

We now turn to discuss how processes propagate their observations
during runtime verification.
Instead of allowing processes to propagate their entire
observations to their neighbor processes, 
they can take advantage of the constructed progression table of the
formula to compute the minimal set of variables whose truth values
need to be propagated. Note that in some
situations it is sufficient for processes to propagate only a subset of their
observations while allowing the receiving process 
to draw the same conclusion about the truth value of the monitored formula. 
Suppose for example that processes $A$ and $B$ monitor an LTL
formula $\varphi = F (a_1 \land a_2 \land b_1 \land b_2)$
and that process $A$ observes $a_1$ and $a_2$. 
Suppose that at some state $s$ process $A$ observes that $a_1 = \bot \land a_2 = \top$.
Then $A$ needs only to propagate the truth value of $a_1$
to $B$ as this would be sufficient to allow $B$ to know 
that $(a_1 \land a_2 \land b_1 \land b_2) = \bot$
and hence $ F (a_1 \land a_2 \land b_1 \land b_2) = ?$.

The advantage of synthesizing Boolean formulas
characterizing the conditions under which the
monitored formula can be simplified to certain
formulas is that they can be used to compute the minimal set
of variables whose truth values 
need to be propagated.
As mentioned earlier, a Boolean formula is given as sums of products
of the form $B_{\phi} = (T_0 + T_1+ ... + T_k)$,
where each term $T_i$ represents a condition under which
the formula $\varphi$ can be simplified to $\phi$ and
has the form $\prod (V)$ where $V$ is a set of variables.
Suppose that at some step $s$ of the trace being monitored
process $A$ simplifies the monitored formula $\varphi$
to formula $\phi$ using its observations.
The question is then what $A$ should
communicate to its neighbor process (i.e.,
which variables whose truth values need to be propagated)?
A simple procedure for computing
the minimal set of variable can be developed by examining sets of synthesized 
Boolean formulas as described below. 

\begin{enumerate}

\item Find all terms in the formula $B_{\phi}$ which
hold to \textit{true} when replacing 
the variables in $B_{\phi}$ by their definite truth values. 
Let us denote the set containing all the terms that hold to \textit{true}
in the formula  $B_{\phi}$  by $L$.

\item Find the term in $L$ with the smallest corresponding $V$ set, 
let us denote that set by $V_{min}$.
In this case, the variables in the set $V_{min}$
represent the minimal set of variables whose
truth values need to be propagated.

\end{enumerate}

Our decentralized monitoring algorithm consists of two phases: setup and monitor.
The setup phase consists of the five steps described at Section \ref{sec:simplificationRules}.
We now summarize the actual monitoring steps in the form of an explicit algorithm that describes how local monitors operate and make decisions:

\begin{enumerate}

\item $[$Read next event$]$. Read next $\sigma_i \in \Sigma_i$ (initially each process reads $\sigma_0$).

\item $[$Compute minimal set of variables to be transmitted$]$. Examine the set of Boolean
formulas derived from the progression table to compute the minimal set of variables 
whose truth values need to be propagated.

\item $[$Compute the receiving process$]$. For our communication strategy, the receiving
process of some process $p$ is fixed between states and computed according
to some round-robin communication policy, as described in Section \ref{sec:DRVAlgori}.

\item $[$Propagate truth values of variables in $V_{min}$ $]$. Propagate the truth values of variables in the minimal set in  $V_{min}$ to the receiving process.


\item $[$Evaluate the formula $\varphi$ and return$]$.  If a definite verdict of $\varphi$ is found return it. That is, if $\varphi = \top$ return $\top$, if  $\varphi = \bot$ return $\bot$.

\item 	$[$Go to step 1$]$. If the trace has not been finished or a decision has not been made then go to step 1.

\end{enumerate}

We now turn to discuss the basic properties of our decentralized monitoring framework. 
Let $\models_{D}$ be the satisfaction relation on finite traces in the decentralized setting and $\models_{C}$ be the satisfaction relation on finite traces in the centralized setting, where both $\models_{D}$ and $\models_{C}$ yield values from the same truth domain. Note that in a centralized monitoring algorithm we assume that there is a central process that observes the entire global trace of the system being monitored, while in our  decentralized monitoring algorithm processes observe part of the trace, perform remote observation, and use the progression
table of the monitored formula in order to setup an efficient communication
strategy and to propagate observations in an optimal way. The following theorems stating the soundness and completeness of our decentralized monitoring algorithm.

\begin{theorem} (\textbf{Soundness}). \label{soundness}
Let $\varphi \in LTL$ and $\alpha \in \Sigma^{*}$. Then $\alpha \models_{D} \varphi =\top/\bot \rimp \alpha \models_{C} \varphi =\top/\bot$.

\end{theorem}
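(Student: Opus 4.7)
The plan is to prove soundness by induction on the number of time steps $t$ elapsed when the decentralized algorithm returns a verdict, maintaining as the inductive invariant that after step $t$ the progressed formula $\varphi^{(t)}_i$ held by each local monitor $M_i$ is at least as strong (in the LTL$_3$ sense) as the centralized progression $\varphi^{(t)}_C$ obtained by applying the classical progression rules to the full global event $g^{(t)}$. In particular, the two agree whenever either reaches a definite value in $\{\top,\bot\}$. At the base step $t=0$, every $M_i$ and the centralized monitor both start with $\varphi$, so the invariant holds trivially.

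For the inductive step, I would assume the invariant at step $t$ and track what happens when $M_i$ reads $\sigma_i \in \Sigma_i$. Monitor $M_i$ simplifies $\varphi^{(t)}_i$ using the locally observed truth values in $\Pi_i(\sigma_i)$, identifies the resulting simplified formula $\phi$ from the progression table, and then computes $V_{\min}$ by examining the terms of the Boolean formula $\mathbb{B}_{\phi}^{\varphi}$. Here I would appeal to Theorem~\ref{ExtensionTheorem}: the extended Boolean formula soundly characterizes exactly those variable configurations under which $\varphi$ simplifies to $\phi$, so transmitting the truth values of the variables in $V_{\min}$ suffices for the receiving monitor to reproduce the same simplification it would have made with full observation. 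Hence after step 4 of the monitoring loop, the receiving monitor's local formula remains equivalent, in the LTL$_3$ sense, to what the centralized monitor would hold after reading $g^{(t)}$.

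The main obstacle I expect is justifying that, across successive rounds of the round-robin schedule, the composition of these local propagations never loses information relative to the centralized progression. Concretely, I must argue that $\mathbb{B}_{\phi}^{\varphi}$ is \emph{complete} in the sense that every configuration of the observed variables producing the simplification $\phi$ satisfies at least one term $T_i$, so $V_{\min}$ is well-defined and the local monitor never commits to a simplification the centralized one would reject. This completeness follows from the construction of $\mathbb{B}_{\phi}^{\varphi_R}$ by exhaustive enumeration of the progression table of $\varphi_R$ together with the extension rules of Section~\ref{sec:extensionRules}, whose soundness is precisely Theorem~\ref{ExtensionTheorem}; the equivalence of variables in each class $E_k$ guarantees that extending the Boolean witness to missing variables does not change the simplification outcome.

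With the invariant in hand, the conclusion is immediate: the algorithm returns a verdict only at step 5, when the progressed local formula equals $\top$ or $\bot$. Because LTL progression is semantics-preserving, i.e.\ $\alpha\tau \models \varphi$ iff $\tau \models \varphi^{(t)}$ for every continuation $\tau \in \Sigma^{\infty}$, a decentralized verdict of $\top$ means every extension of $\alpha$ satisfies $\varphi$, which is exactly the condition under which the centralized LTL$_3$ monitor outputs $\top$; the $\bot$ case is symmetric. This yields $\alpha \models_{D} \varphi = \top/\bot \rimp \alpha \models_{C} \varphi = \top/\bot$, as required.
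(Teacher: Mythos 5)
Your proposal is considerably more detailed than what the paper actually offers: the paper gives no proof of this theorem beyond a single sentence asserting that soundness and completeness ``can be inferred from the soundness of the progression table, Theorem~\ref{ExtensionTheorem}, and the round-robin strategy.'' Your argument assembles exactly those three ingredients into an explicit induction, so in spirit you are following the intended route, and your identification of the completeness of $\mathbb{B}_{\phi}^{\varphi}$ (that every configuration yielding $\phi$ satisfies some term, so $V_{\min}$ is well-defined) as the load-bearing step is a genuine contribution that the paper leaves implicit.

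However, there is a flaw in the invariant you induct on. You claim that after step $t$ the local formula $\varphi^{(t)}_i$ and the centralized progression $\varphi^{(t)}_C$ ``agree whenever \emph{either} reaches a definite value.'' That two-way agreement cannot hold step-by-step: under the round-robin policy a monitor receives information from only one neighbour per round, so an observation made at process $p_j$ takes up to $n$ rounds to reach $p_i$. The centralized monitor, which sees the full event $g^{(t)}$ immediately, can therefore reach $\top$ or $\bot$ at a step where every local monitor is still at $?$. Your inductive step would fail when you try to establish the direction ``centralized definite $\Rightarrow$ local definite.'' Fortunately, soundness needs only the converse direction, and that one survives: every truth value a local monitor substitutes is either its own observation or a value truthfully propagated by a reliable process (assumption A2), and every simplification it applies is a semantics-preserving progression step, so any definite local verdict is entailed by a subset of the information available to the centralized monitor and hence is also its verdict. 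You should weaken the invariant to this one-directional form (local definite implies centralized definite); the discarded direction is what a proof of the Completeness theorem would require, and there the latency of the round-robin schedule must be confronted rather than assumed away. Your final paragraph, which appeals to the semantics preservation of progression to convert a local $\top/\bot$ into the LTL$_3$ condition on all continuations of $\alpha$, is correct and is the right way to close the argument once the invariant is repaired.
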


Soundness means that all verdicts (truth values taken from a truth-domain) found by the decentralized monitoring algorithm for a global trace $\alpha$ with respect to the property $\varphi$ are actual verdicts that would be found by a centralized monitoring algorithm that have access to the trace $\alpha$. 

\begin{theorem} (\textbf{Completeness}).
Let $\varphi \in LTL$ and $\alpha \in \Sigma^{*}$. Then $\alpha \models_{C} \varphi =\top/\bot \rimp \alpha \models_{D} \varphi =\top/\bot$.

\end{theorem}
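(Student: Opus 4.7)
The plan is to prove the contrapositive-free direction by induction on the length of the global trace $\alpha$, maintaining the invariant that after the decentralized algorithm has processed the prefix of $\alpha$ up to time $t$ (and performed at most $n$ additional round-robin steps), the collective knowledge distributed among the monitors $\{M_1,\ldots,M_n\}$ is logically equivalent, with respect to the truth value of $\varphi$, to the knowledge of the centralized monitor that has access to the same prefix. Because the partition assumption $AP_i \cap AP_j = \emptyset$ ensures each atomic proposition is observable by exactly one process, the union of local observations at time $t$ coincides exactly with the global event $\sigma_t$ seen by the centralized monitor, so the only question is whether the communication mechanism propagates these observations to a monitor capable of drawing the same verdict.

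In the inductive step I would argue as follows. Assume that at step $t-1$ the invariant holds; at step $t$ the designated sender in the round-robin ring computes, from the Boolean formula $\mathbb{B}_{\phi}^{\varphi_R}$ associated with the currently simplified formula, the minimal variable set $V_{\min}$ needed to identify which simplification branch its state witnesses, and forwards only those truth values. The receiving monitor then performs the same LTL progression step on $\varphi$ that a centralized monitor would perform on the same partial assignment. Soundness of the extension rules, established in Theorem \ref{ExtensionTheorem}, guarantees that every simplification branch of the reduced formula $\varphi_R$ lifts correctly to $\varphi$, and hence the local progression carried out by the monitor mirrors the global progression carried out centrally. Thus any simplification $\varphi \rightsquigarrow \top$ or $\varphi \rightsquigarrow \bot$ that becomes available centrally at step $t$ becomes available to at least one decentralized monitor within a bounded delay.

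The main obstacle, and the step I expect to be hardest, is justifying that the construction of $V_{\min}$ does not under-transmit. One must show that whenever a term $T_i$ of $\mathbb{B}_{\phi}^{\varphi_R}$ evaluates to \emph{true} under the sender's local observations, the variables of $V_{\min}$ forwarded to the next monitor suffice for that monitor, together with its own observations and the Boolean formulas it possesses, to re-derive the same simplification $\varphi \rightsquigarrow \phi$ that the centralized algorithm would perform. This reduces to the definitional property that $\mathbb{B}_{\phi}^{\varphi_R}$ characterizes exactly the configurations that simplify $\varphi_R$ to $\phi$, combined again with Theorem \ref{ExtensionTheorem} to transfer the guarantee from $\varphi_R$ back to $\varphi$. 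Concretely, one verifies that if $V_{\min}$ selects the smallest $V$ set among those terms that hold, then any superset (in particular the full observation transmitted in the centralized model) yields the same simplified formula.

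Finally, to close the argument I would bound the propagation delay: since the round-robin schedule visits every process within $n$ rounds and since no observation is ever silently dropped (only losslessly compressed into $V_{\min}$), any centralized verdict at time $t$ must arise in the decentralized setting at some time $t' \le t + n$. Because the statement concerns only the existence of a decentralized verdict matching the centralized one, this bounded delay is sufficient, completing the proof of completeness.
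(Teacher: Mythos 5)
Your proof is correct in outline and rests on exactly the same three ingredients the paper itself invokes for this theorem --- the correctness of the progression table (via the Boolean formulas $\mathbb{B}_{\phi}$ characterizing the simplification branches), the soundness of the extension rules (Theorem~1), and the round-robin communication strategy. The paper gives only a one-sentence justification citing these ingredients and never carries out the argument, so your inductive elaboration --- the invariant on collective knowledge, the check that $V_{min}$ does not under-transmit, and the bounded propagation delay of at most $n$ rounds --- is strictly more detailed than, but entirely consistent with, the paper's own reasoning.
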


Completeness means that all verdicts found by the centralized monitoring algorithm 
for some trace $\alpha$ with respect to the property $\varphi$ 
will eventually be  found by the decentralized monitoring algorithm. 
The soundness and completeness of our monitoring approach can be inferred from the 
soundness of the progression table of a formula, Theorem \ref{ExtensionTheorem},
and the round-robin strategy.

\section{Experiments}

We have evaluated our monitoring approach
against the LTL decentralized monitoring 
approach of Bauer and Falcone \cite{BauerF12},
in which the authors developed a monitoring algorithm for LTL based 
on the formula-progression technique \cite{Bacchus1996}.
The formula progression technique takes a temporal formula $\phi$ and a current assignment $I$ over the literals of $\phi$ as inputs and returns a new formula after acting  $I$ on $\phi$. The idea is to rewrite a temporal formula when an event $e$ is observed or received to a formula which represents the new requirement that the monitored system should fulfill for the remaining part of the trace.
We also use the tool DECENTMON3 \texttt{(http://decentmon3.forge.imag.fr/)} in our evaluation,
which is a tool dedicated to decentralized monitoring.
The tool takes as input multiple traces,
 corresponding to the behavior of a distributed system, and an LTL formula.
 The reason for choosing DECENTMON3 in our evaluation is that 
 it makes similar assumptions to our presented approach.
 Furthermore, DecentMon3 improves the original DecentMon tool developed in \cite{BauerF12}
  by limiting the growth of the size of local obligations 
  and hence it may reduce the   size of propagated messages. 
  We believe that by choosing the tool DECENTMON3 as baseline for comparison 
  we make the evaluation much fairer.

We denote by BF the monitoring approach of Bauer and Falcone,
and PDM our presented approach in which processes 
construct a progression table for the monitored formula
which will be used to synthesize efficient round
robin policy for processes and to propagate
observations in an optimal way.
We compare the approaches against 
benchmark for patterns of formulas \cite{patternSite} (see Table \ref{table: tableResult2}). 
In Table \ref{table: tableResult2}, the following metrics are used: $\# msg$, the total number of exchanged messages; $|msg|$, the total size of exchanged messages (in bits);  $|trace|$,  the average length of the traces needed to reach a verdict; and $|mem|$, the memory in bits needed for the structures (i.e., formulas plus state for our algorithm).
 For example, the first line in Table \ref{table: tableResult2} 
 says on average, traces were of length 4.65 when  one 
of the local monitors in approach BF came to a verdict, 
and of length 5.26 when one of the monitors in PDM  came to a verdict.

\begin{table*} 
\begin{center}
\begin{tabular}{ |c|c|c|c|c|c|c|c|c| }
\hline
 &  \multicolumn{2}{|c|}{$|trace|$}& \multicolumn{2}{|c|}{$\# msg.$} & \multicolumn{2}{|c|}{$|msg.|$}  & \multicolumn{2}{|c|}{$|mem|$} \\
 \hline
$|\varphi|$ & BF  & PDM & BF & PDM & BF & PDM& BF & PDM  \\
\hline
abs &  4.65   & 5.10 & 4.46  & 5.15&  1,150    & 102  & 496 .4 & 11.9 \\
\hline
exis &  27.9 &  29.5  &  19.7 & 20.8 &  1,100 & 411  & 376  & 19.8\\
\hline
bexis &  43.6 & 41.3 &  31.6 &   31.9 &  55,000 & 25415  & 28,200&  20.6\\
\hline
univ &   5.86&  6.2 &   5.92&  5.82 &   2,758 & 138  & 498& 22.5\\
\hline
prec &   54.8 & 54.5 &   25.4& 26.9 &   8,625 & 755  & 663 & 34.9\\
\hline
resp &   622 & 622 &   425&  515 &    22,000 & 1211 & 1,540 & 17.5\\
\hline
precc & 4.11 & 5.2 &  4.81& 5.95 & 5,184 & 356 & 1,200 & 15.7\\
\hline
respc & 427 & 444&  381 & 409& 9,000 & 2799&4,650 & 22.1 \\
\hline
consc & 325 & 324& 201 & 234 & 7,200 & 1223 & 2,720 & 15.8 \\
\hline
\end{tabular}
\end{center}
\caption{Benchmarks for 1000 generated LTL pattern formulas (Averages)} \label{table: tableResult2}
\end{table*}

\subsection{Benchmarks for Patterns of formulas}

We  compared the two approaches with realistic
specifications obtained from specification patterns \cite{Dwyer1999}.
Table \ref{table: tableResult2} reports the verification results
for different kinds of patterns 
(absence, existence, bounded existence, universal, precedence,
response, precedence chain, response chain, constrained chain).
The specification formulas are available at \cite{patternSite}.
We generated 1000 formulas monitored over
the same setting (processes are synchronous and reliable).
For this benchmark we generated formulas
as follows. For each pattern, we
randomly select one of its associated formulas.
Such a formula is ``parametrized''
by some atomic propositions from the alphabet of the distributed system
which are randomly instantiated.
For this benchmark (see Table \ref{table: tableResult2}), 
the presented approach  leads to significant reduction
on both the size of messages and the amount of memory consumption
compared to the optimized version of BF algorithm (DECENTMON3).

\section{Related Work}

Finding redundancies in formulas has been studied in the form of
vacuity detection in temporal logic formulas \cite{Kupferman1999,ArmoniFFGPTV03}. 
Here, the goal is to identify  vacuously valid subparts of formulas, 
indicating, for example, a specification error.
In contrast, our focus is to reduce the complexity of the formula by detecting
variables whose logical influences on the outcome of the formula are equivalent
and then reduce the complexity of the formula by reducing number of variables.
The goal is to analyze efficiently a simplified form of the formula 
and draw some correct conclusions about the original formula
by applying some valid extension rules.

The problem of representing formulas compactly has received attention from
many different angles. For example, BDDs attempt to represent propositional
formulas concisely, but they suffer from the variable ordering problem and are
prone to a worst-case exponential blow-up \cite{Bryant1992}. 
We believe that our approach based on formula simplification table
 can be used to optimize dramatically BDD construction of Boolean formulas,
as it helps to identify variables with highest
logical influences on the outcome of the formula.
Furthermore, computing logical influence weights of variables
in a formula can help to find an optimal variable ordering
which can lead to the most compact representation of the formula.


Various simplification rules have also been successfully applied
as  a  preprocessing  step  for  solving,  usually  for  bit-vector  arithmetic  \cite{Ganesh2007,Jha2009}.
These  rules  are  syntactic  and  theory-specific. 
In contrast, the technique described in this paper is not meant as a
preprocessing step for solving and guarantees non-redundancy,
it is rather a simplification technique for detecting and contracting
variables with equivalent logical influences
for the purpose of optimizing formal analysis of formulas
by constructing simpler forms sufficient to prove the original property.

The literature on decentralized monitoring problem
is a rich literature, where several monitoring algorithms
 have been developed for verifying distributed systems at runtime \cite{Sen2004,BauerF12,ColomboF14,FalconeCF14,Scheffel14,MostafaB15}.
We discuss here some interesting works on the problem and
 refer the reader to \cite{Hokayem2017,Bataineh18} for a more comprehensive survey.

Bauer and Falcone \cite{BauerF12} propose a decentralized framework for runtime monitoring of LTL. The framework is constructed from local monitors which can only observe the truth value of a predefined subset of propositional variables. The local monitors can communicate their observations in the form of a (rewritten) LTL formula towards its neighbors. 
Mostafa and Bonakdarpour \cite{MostafaB15} propose similar decentralized LTL monitoring framework, but truth value of propositional variables rather than rewritten formulas are shared.

The work of Falcone et al. \cite{FalconeCF14} proposes a general decentralized monitoring algorithm in which the input specification is given as a deterministic finite-state automaton rather than an LTL formula. Their algorithm takes advantage of the semantics of finite-word automata, and hence they avoid the monitorability issues induced by the infinite-words semantics of LTL. They show that their implementation outperforms the Bauer and Falcone decentralized LTL algorithm \cite{BauerF12} using several monitoring metrics.

 Colombo and Falcone \cite{ColomboF16} propose a new way of organizing monitors called choreography, where monitors are organized as a tree across the distributed system, and each child feeds intermediate results to its parent. The proposed approach tries to minimize the communication induced by the distributed nature of the system and focuses on how to automatically split an LTL formula according to the architecture of the system.  

El-Hokayem and Falcone \cite{Hokayem2017} propose a new framework for decentralized monitoring with new data structure for symbolic representation and manipulation of monitoring information in decentralized monitoring. In their framework, the formula is modeled as an automaton where transitions of the automaton are labeled with Boolean expressions over atomic propositions of the system.

Recently, Al-Bataineh and Rosenblum \cite{Bataineh18} propose a new framework 
for decentralized LTL monitoring based on the notion of the tableau technique,
where the monitored formula is represented and decomposed using tableau
decomposition rules. In their framework,
they develop also a logical inference engine that allows processes
to propagate their observations as truth values of
 atomic formula, compound formulas, and temporal
 formula which depends mainly on the syntactic structure of the formula
 and the observation power of processes.

\section{Conclusion and Future Work}

We presented a novel framework for decentralized monitoring
of LTL formulas based on the notion
of formula progression table.
The formula progression table is a mathematical table that
shows all possible resulting forms of the main formula under 
different truth assignments of its variables.
The progression table can be used to extract several useful
information about the analyzed formula
including logical influence weights of the variables in the formula
and to identify variables with highest logical influence
on the outcome of the formula.
We showed how formula progression table can be used
to optimize decentralized monitoring solutions of LTL formulas
by synthesizing efficient communication strategies for processes
and propagating information in an optimal way.
In future work, we aim to employ some decomposition techniques
to split the global LTL formula into local LTL expressions. 
This would allow processes to
construct multiple progression tables,
which help to avoid the memory-explosion problem 
 of the formula progression table when dealing with large LTL formulas.

 \bibliographystyle{plain}
\bibliography{references}

\end{document}